\documentclass[aps,pra,twocolumn,superscriptaddress]{revtex4-1}
\usepackage{amsfonts,epsfig}
\usepackage[color]{lapdf}
\unitlength1cm
\usepackage{hyperref}
\usepackage{latexsym}
\usepackage{amsmath}
\usepackage{amssymb}
\usepackage{amsfonts}
\usepackage{amsthm}
\usepackage{mathrsfs}
\usepackage{natbib}
\usepackage{color,verbatim,graphics}
\usepackage{psfrag}
\usepackage{verbatim,cleveref}
\usepackage{graphicx}
\usepackage{wrapfig}
\usepackage{fullpage}
\usepackage{arrayjobx}
%
%
%


\usepackage{xy}
\xyoption{matrix}
\xyoption{frame}
\xyoption{arrow}
\xyoption{arc}

\usepackage{ifpdf}
\ifpdf
\else
\PackageWarningNoLine{Qcircuit}{Qcircuit is loading in Postscript mode.  The Xy-pic options ps and dvips will be loaded.  If you wish to use other Postscript drivers for Xy-pic, you must modify the code in Qcircuit.tex}
\xyoption{ps}
\xyoption{dvips}
\fi

\entrymodifiers={!C\entrybox}

\newcommand{\bra}[1]{{\left\langle{#1}\right\vert}}
\newcommand{\ket}[1]{{\left\vert{#1}\right\rangle}}

\makeatletter
\newtheorem*{rep@theorem}{\rep@title}
\newcommand{\newreptheorem}[2]{%
\newenvironment{rep#1}[1]{%
 \def\rep@title{#2 \ref{##1}}%
 \begin{rep@theorem}}%
 {\end{rep@theorem}}}
\makeatother

\newcommand{\suppl}{appendix}
\newcommand{\bop}{\mathcal{B}}
\newcommand{\hil}{\mathcal{H}}

\setlength{\textheight}{8.72in}
\setlength{\textwidth}{6.8in}
\setlength{\topmargin}{0.0in}
\setlength{\headheight}{0.0in}
\setlength{\headsep}{0.3in}
\setlength{\oddsidemargin}{-.14in}
\setlength{\parskip}{0.3mm}




\newcommand{\projS}[1]{\ket{#1}\!\!\bra{#1}}
\newcommand{\proj}[2]{\ket{#1}\!\!\bra{#2}}

\newcommand{\braket}[2]{\left \langle #1 | #2 \right \rangle}


\newtheorem{definition}{Definition}

\newtheorem{theorem}{Theorem}

\newtheorem{lemma}{Lemma}

\newreptheorem{lemma}{Lemma}
\newreptheorem{theorem}{Theorem}

\newcommand{\indexset}[1]{\left[#1\right]}

\begin{document}

\title{Entanglement improves classical control}

\author{Nguyen Truong Duy}
\affiliation{Centre for Quantum Technologies, National University of Singapore, 3 Science Drive 2, Singapore 117543}
\affiliation{School of Computing, National University of Singapore, 13 Computing Drive, Singapore 117417}
\author{Matthew McKague}
\affiliation{Centre for Quantum Technologies, National University of Singapore, 3 Science Drive 2,Singapore 117543}
\affiliation{Department of Physics, University of Otago, Dunedin, New Zealand}
\author{Stephanie Wehner}
\affiliation{Centre for Quantum Technologies, National University of Singapore, 3 Science Drive 2,Singapore 117543}
\affiliation{School of Computing, National University of Singapore, 13 Computing Drive, Singapore 117417}

\begin{abstract}
	Electronic devices all around us contain classical control circuits. Such circuits consist of a network of controllers which can read and write signals to wires
	of the circuit with the goal to minimize the cost function of the circuit's output signal. 
	Here, we propose the use of shared entanglement between controllers as a resource to improve the performance of otherwise purely classical control circuits. 
	We study a well-known example from the classical control literature and demonstrate that allowing two controllers to share entanglement improves their ability to control. 
	More precisely, we exhibit a family of circuits in which the the cost function using entanglement stays constant, but the minimal cost function without entanglement
	grows arbitrarily large. This demonstrates that entanglement can be a powerful resource in a classical control circuit.
\end{abstract}

\maketitle


Control theory has been used with great success to enable the control of \emph{quantum} systems. Whether it comes to 
battling decoherence~\cite{viola:dynamical}, or ensuring error-correction~\cite{qControlECC}, it is certainly fair to say that quantum control forms 
a necessary tool to build quantum devices of any kind (see e.g.~\cite{dong:quantumControl} for a recent survey). Indeed, quantum control has even been employed to study
fundamental aspects of quantum mechanics itself~\cite{delayedChoice}. In all of these works, the 
aim was to extend techniques and methods from classical control theory to gain a handle on \emph{quantum} systems, which forms a daunting task due the possibility 
of coherent control in superposition and entanglement.

Here, however, we raise the opposite question, namely whether quantum effects can improve \emph{classical}~\footnote{Throughout, the term ''classical'' refers to non-quantum
and not a particular form of control as in~\cite{witsenhausen:information}.} control itself. More precisely, we consider a classical control circuit in which all signals are classical and the overall aim of the circuit is to solve a \emph{classical} control problem. The only quantum ingredient will be that we allow the controllers in the circuit to share entanglement and to perform quantum measurements on the entangled state. 
To see whether such entanglement can at all be useful, we study the simple toy circuit depicted in Figure~\ref{fig:ProblemCircuit} which was proposed by Witsenhausen~\cite{witsenhausen} to study aspects of classical control and is well-known under the name ``Witsenhausen's counterexample'' (see e.g.~\cite{ho:witsenReview,sahai} for surveys). 
In its original form, Witsenhausen's circuit~\cite{witsenhausen} dealt with continuous signals but it was later discretized~\cite{ho:discrete,papadimitriou}, which is the approach we will 
follow here.
The objective of this circuit is to take an input signal on a line and subsequently damp it down in as efficient a manner as possible.  
We can accomplish this using two controllers, $c_1$ and $c_2$, which can each add a signal to the line. The first controller has perfect information, but its use incurs a cost. 
The second controller has imperfect information, but there is no penalty for its use.  
The problem is made richer by giving the the controllers $c_{1}$ and $c_{2}$ access to additional resources summarized by $\rho$. Classically, $\rho$ is a shared random variable, 
but quantumly $\rho$ can also be a shared entangled state.

\begin{figure*}[!]
\centering
\includegraphics[scale=0.9]{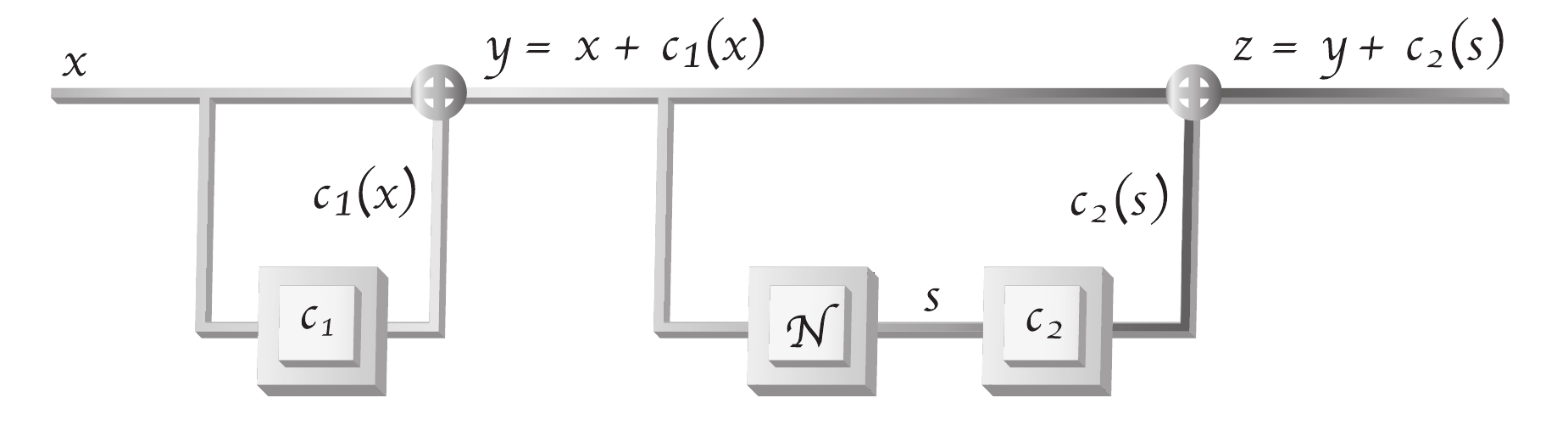}



\caption{Discrete Witsenhausen Circuit - The input to the circuit is a random integer $x \in \mathbb{Z}$ chosen according to a probability distribution $P_X$.
The input is fed into a controller $c_{1}$ which outputs an integer. The output from $c_{1}$ is then added to the line, giving $y = x + c_{1}(x)$. 
At this point the second controller, $c_{2}$, receives the signal from the line. However, the signal is received via a classical noisy channel $\mathcal{N}$, characterized by the 
probability $\mathcal{N}(s|y)$ of outputting the symbol $s \in \mathbb{Z}$ given input $y$. 
The second controller $c_{2}$ then again outputs an integer which is added to the line, giving the final output of the circuit $z = y + c_{2}(s)$.}
\label{fig:ProblemCircuit}
\end{figure*}

The performance of the controllers is evaluated using a cost function which grows with the remaining signal left on the line following the controllers.
More precisely, in terms of the variables defined in Figure~\ref{fig:ProblemCircuit}
\begin{align}
\label{eq:costFunc}
&	C^{\mathcal{N}, P_X, k} (c_{1}, c_{2}, \rho) = \nonumber \\
& \, \, \, \, \,\, \, \, \, \underset{P_X, \mathcal{N}}{\mathbb{E}}\left[k(c_{1}(x))^{2} + (x + c_{1}(x) + c_{2}(s))^{2}\right]
\end{align}
where $k \in \mathbb{R}^{+}$ is a parameter. A specific instance of the discrete Witsenhausen circuit is characterized by the choice of input distribution $P_X$,
the channel $\mathcal{N}$ and the parameter $k$ defining the cost function, i.e., by $(\mathcal{N}, P_X, k)$.  
Meanwhile, a specific control strategy is described by $(c_{1}, c_{2}, \rho)$.

We consider three classes $\Omega$ of strategies. The first two are \emph{classical} strategies. 
These can be deterministric strategies $D$, 
where $c_{1}$ and $c_{2}$ are functions $\mathbb{Z} \rightarrow \mathbb{Z}$ and $\rho$ may be taken to be $0$. 
Classically, we also allow strategies $SR$ that employ some shared randomness $\rho$. In this case we have that for all $(c_{1}, c_{2}, \rho) \in SR$, $c_{1}$, $c_{2}$ are functions $\mathbb{Z} \times R \rightarrow \mathbb{Z}$ where $\rho \in R$ is the shared random variable. Finally, we allow for \emph{quantum} strategies, where we use $SE$ to denote this 
class of strategies with shared entanglement $\rho$. Note that the input and output of the controllers remain purely classical. For any $(c_{1}, c_{2}, \rho) \in SE$, 
$\rho \in \bop(\hil_1\otimes \hil_2)$ is a quantum state shared between controller 1 holding system $\hil_1$ and controller 2 holding system $\hil_2$.
Without loss of generality, $c_{1}$ performs a measurement on $\hil_1$ indexed by its input $x$, and outputs the measurement outcome $m = c_1(x)$ on the wire. 
Formally, this means that the strategy
of $c_1$ is defined by measurement operators $\{\{\Pi_{x,m}^{1} \geq 0\}_{m}\}_x$ where for each $x$
we have $\sum_{m} \Pi_{x,m}^{1} = I_{H_{1}}$. 
Similarly, the second controller $c_2$ performs a measurement $\{\Pi_{s,\ell}^{2}\}_{\ell}$ on $\hil_2$ indexed by its input $s$ and outputs the outcome $\ell = c_2(s)$ on the wire. 
Given some instance $(\mathcal{N}, P_X, k)$ we define the \emph{minimum cost} for a class of strategies $\Omega \in \{D,SR,SE\}$ to be
\begin{equation}
C_{\Omega}^{\mathcal{N}, P_X, k} = \min_{(c_{1}, c_{2}, \rho) \in \Omega} \text{C}^{\mathcal{N}, P_X, k} (c_{1}, c_{2}, \rho)\ .
\end{equation}
%

\section{Result}

Here, we prove that a strategy using entanglement outperforms any classical strategy for the circuit given in Figure~\ref{fig:ProblemCircuit} and hence entanglement improves classical 
control. More precisely, we show that for the generalized discrete Witsenhausen circuit, there exist a classical channel $\mathcal{N}_{0}$, a probability distribution $P_X^{0}$ and a constant $k$ such that
\begin{equation}
C^{\mathcal{N}_{0}, P_X^{0}, k}_{SE} < C^{\mathcal{N}_{0}, P_X^{0}, k}_{SR}\ .
\end{equation}
We prove this theorem by exhibiting a particular strategy in $SE$ and a one-parameter family of instances $(\mathcal{N}_{t}, P_X^{t},k)$ such that 
$C_{SE}^{\mathcal{N}_{t}, P_X^{t}, k}$ remains constant for all $t$, but $C_{SR}^{\mathcal{N}_{t}, P_X^{t}, k}$ increases with $t$ without bound. 
Hence for sufficiently large $t$ our quantum strategy is better than the minimum cost for $SR$. What's more, it can be made arbitrarily large, showing a strong
separation between classical and quantum strategies.


\section{Methods}

Let us now give an overview of our proof - full details can be found in the \suppl. First, let us specify the parameters of the Witsenhausen instance. 
We will use a channel $\mathcal{N}_t = \mathcal{N} \circ \epsilon_t$, where $\mathcal{N}: [q]\times[d] \rightarrow ([q] \times [d])^{\times 2}$ with the notation $[q] = \{0,\ldots,q-1\}$. Such a channel for e.g.\ $(q,d) = (6,4)$ was studied in~\cite{winter}, where the task was to transmit messages $m \in [q]$ 
over $\mathcal{N}$ with \emph{zero} 
error using an encoder $\mathcal{E}$ and decoder $\mathcal{D}$. 
Here, we will make use of two properties of $\mathcal{N}$ established in~\cite{winter}: 
\begin{enumerate}
\item For any classical encoder and decoder, the maximal number of symbols that can be transmitted over $\mathcal{N}$ with \emph{zero} error is $q-1$.
\item For a quantum encoder and decoder, there exists a strategy to send $q$ symbols with zero error. More precisely, 
there exists a quantum state $\ket{\Psi}$ shared between the encoder and 
decoder, as well as measurements by the encoder depending on its input 
$m \in [q]$ giving outcome $j$, and a measurement 
by the decoder depending on its input $s = \mathcal{N}((m,j))$
such that the decoder recovers $(m,j)$ with probability $1$ for all $m \in [q]$. 
\end{enumerate}
The purpose of the channel $\epsilon_t$ is to map integers $y = mt + c_1(x)$ to inputs $(m,j)$ accepted by the channel $\mathcal{N}$. This is accomplished by 
letting
$\epsilon_t: \mathbb{Z} \rightarrow \indexset{q} \times \indexset{d}$ for $t \geq d$:
\begin{itemize}
	\item If $x = at + b$ for some $a \in \left[q\right]$ and $b \in \left[d\right]$ then $\epsilon_{t}$ outputs $(a,b)$.
	\item Otherwise $\epsilon_{t}$ outputs $(a, b)$ chosen uniformly at random from $\left[q\right] \times \left[d\right]$.
\end{itemize}
It is not difficult to see that the maximum number of symbols that can be transmitted over $\mathcal{N}_t$ with \emph{zero} error using a purely classical encoder and decoder remains $q-1$. The parameter $k > 0$ can be chosen arbitrarily, independent of $t$ and $\mathcal{N}$.

The input distribution we will use in our Witsenhausen instance is $P_X^t(x) = 1/q$ if $x = mt$ and $P_X^t(x) = 0$ otherwise. Intuitively, one can thus think of the input as being $m$, amplified by $t$. Our argument is identical for any other distribution on $m$.

For concreteness the reader may assume $(q,d)=(6,4)$ and $k=1$.

\subsubsection{Quantum cost function}
Let us first show that there exists a quantum strategy for which the cost function is independent of $t$. This is an easy consequence of the quantum strategy for zero error coding given in~\cite{winter} adapted to our setting.

Let controllers one and two share the state $\rho = \projS{\Psi}$ from the zero error coding scheme above. Upon receiving $x = mt$, controller $c_{1}$ 
first applies $\epsilon_t(x) = (m,0)$ himself to find $m$. He then uses the measurement for $m$ performed by the encoder $\mathcal{E}$ in~\cite{winter} on his part of $\ket{\Psi}$ to obtain $j$. He outputs 
$c_1^t(x) = j$ to be added to the signal which is then $y = x + j = mt+j$. 

Note that controller $c_{2}$ receives 
output $s = \{(m,j), (m^{\prime}, j^{\prime})\}$ from the 
channel $\mathcal{N}_{t}$.
He performs the measurement by the decoder of~\cite{winter} to obtain $(m,j)$. 
He then outputs $c_2^t(s) = -mt - j$ to be added to the signal, which is 
gives $z = y +  c_2^t(s) = 0$.

Since the final signal on the line is $z = mt + j - mt - j = 0$, the cost for this strategy is determined only by the term $k(c_{1}^t(x))^{2}$. Using the fact that $c_1^t(x) \leq d - 1$ we can show that for our strategy
\begin{align}
C^{\mathcal{N}_{t}, P_X^{t}, k}(c_{1}^t, c_{2}^t, \projS{\Psi}) 
& = k\sum_{m \in \left[q\right]} P_X(mt) (c_{1}^t(mt))^{2}\\
& \leq k d^2\ .
\end{align}

\subsubsection{Classical cost function}
The challenging part is to show that the classical cost function can be made arbitrarily large with $t$. We accomplish this in a series of lemmas, which together yield the theorem below. The main intuition behind our argument is the observation that when $t$ grows very large, but the cost function is still bounded, then controller two must have obtained a reasonable estimate for any possible $m \in [q]$. 
We then argue by contradiction and show that if 
the cost function is bounded, then controller one and controller two can transform their control strategy into a strategy for sending more then $q-1$ 
symbols over the channel $\mathcal{N}$ with \emph{zero} error, which is a contradiction.

\begin{theorem}
	Let $(\mathcal{N}_{t}, P_X^{t}, k)$ be a Witsenhausen instance as outlined above. Then for any $M \in \mathbb{R}$ there exists a $t_{0} \geq d$ such that for all $t \geq t_{0}$
	\begin{equation}
	C_{SR}^{\mathcal{N}_{t}, P_X^{t}, k} > M.
\end{equation}
\end{theorem}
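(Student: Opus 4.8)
The plan is to argue by contradiction: suppose that for some fixed $M$ there is a sequence $t \to \infty$ along which the minimum classical cost $C_{SR}^{\mathcal{N}_t, P_X^t, k}$ stays bounded by $M$, and derive from this a classical zero-error coding scheme for $\mathcal{N}$ transmitting $q$ symbols, contradicting property~1 of~\cite{winter}. The first step is a reduction from shared randomness to determinism: since the cost function \eqref{eq:costFunc} is linear in the distribution over $\rho \in R$, for every $SR$ strategy there is a value of $\rho$ whose deterministic strategy $(c_1^\rho, c_2^\rho)$ achieves cost at most the average, so it suffices to bound $C_D^{\mathcal{N}_t,P_X^t,k}$ from below. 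Hence fix a deterministic strategy $(c_1, c_2)$ with cost $\le M$ for each large $t$.

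Next I would exploit the structure of the cost. The cost is $\mathbb{E}[k\,c_1(x)^2 + z^2]$ where $x = mt$ is uniform over $\{0,t,2t,\dots,(q-1)t\}$, $y = mt + c_1(mt)$, and $z = y + c_2(s)$ with $s$ the (random) channel output $\mathcal{N}_t(y) = \mathcal{N}(\epsilon_t(y))$. Boundedness of the first term, $k\sum_m \tfrac1q c_1(mt)^2 \le M$, forces $|c_1(mt)| \le \sqrt{qM/k}$ for every $m$; in particular $|c_1(mt)|$ is bounded by a constant $B$ independent of $t$, so for $t \ge d > B$ we get $y = mt + c_1(mt)$ with $c_1(mt) \in (-t, t)$ — but to make $\epsilon_t$ act as intended we actually want $c_1(mt) \in [d]$. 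This is the first thing to nail down: either restrict attention to the regime where the optimal $c_1$ is forced into $[d]$ by the cost (showing negative or large values are suboptimal), or, more robustly, note that whatever $c_1(mt)$ is, $y$ still lies in a bounded window around $mt$, and define the ``message pair'' recovered by $\epsilon_t$ accordingly; the point is that from $y$ alone one can read off $m$ once $t$ is large enough, so $\epsilon_t(y)$ encodes $(m, j)$ with $j$ determined by $c_1$. Boundedness of the second term, $\mathbb{E}[z^2] \le M$, then says that with probability close to $1$ (by Markov, for each $m$ with the $1/q$ weight) the output $z = mt + c_1(mt) + c_2(s)$ is small, i.e. $|mt + c_1(mt) + c_2(s)| \le \sqrt{qM}$, which pins down $c_2(s) = -mt + O(1)$, hence pins down $m$ exactly for $t$ large. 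Thus controller two, from $s$, can compute $m$: that is precisely a zero-error decoder for the composite channel.

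The main obstacle, and the heart of the argument, is converting ``$z$ is small on average'' into ``$z = 0$ always,'' i.e. upgrading a low-cost strategy to a genuine \emph{zero}-error code, since property~1 of~\cite{winter} is about exact zero-error transmission. The resolution I would pursue: fix $t$ large enough that $\sqrt{qM}/t < 1/2$, so that $|z| \le \sqrt{qM}$ together with $z \equiv mt + c_1(mt) + c_2(s) \pmod{1}$ (all integers) and the fact that distinct $m$ give values of $mt + c_1(mt)$ separated by more than $2\sqrt{qM}$ implies: for every channel output $s$ that can arise from message $m$, the decoder's guess $\hat m(s) := $ (the integer nearest $-c_2(s)/t$, say) equals $m$ — provided $z$ is small for that particular $s$, not just on average. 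To remove the ``on average'' caveat, observe that the event $|z| > \sqrt{qM/\delta}$ has probability $< \delta$ under $P_X^t$ composed with $\mathcal{N}_t$; if we also use that $\mathcal{N}_t$ only adds extra randomness through $\epsilon_t$ on inputs \emph{not} of the form $at+b$ — but our $y = mt + c_1(mt)$ \emph{is} of that form once $c_1(mt) \in [d]$ — then $\epsilon_t(y) = (m, c_1(mt))$ deterministically, and the only remaining randomness is the channel $\mathcal{N}$ itself, whose outputs $s$ each occur with positive probability. Hence if cost is finite, \emph{every} output $s$ reachable from $(m, c_1(mt))$ under $\mathcal{N}$ must yield small $|z|$ (an output of positive probability contributing a term $\ge (t/2)^2 \cdot (\text{positive})$ would blow up the cost past $M$ for large $t$), so the decoder $\hat m$ recovers $m$ with certainty. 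This gives a classical zero-error scheme: encoder $m \mapsto (m, c_1(mt))$ into $\mathcal{N}$, decoder $s \mapsto \hat m(s)$, transmitting all $q$ symbols of $[q]$ — contradicting property~1. Therefore no such bounded sequence exists, proving the theorem. I would carry out the steps in the order: (i) $SR \to D$ reduction; (ii) first-term bound forcing $|c_1(mt)|$ bounded and, after an optimality observation, $c_1(mt)\in[d]$; (iii) second-term bound forcing, for large $t$, that every positive-probability channel output from each $m$ yields $|z|<t/2$; (iv) extract the zero-error encoder/decoder and invoke property~1 for the contradiction; (v) read off the explicit $t_0$ from the constants $q$, $k$, $M$ and the (finite) minimum positive transition probability of $\mathcal{N}$.
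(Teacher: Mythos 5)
Your proposal is correct and follows essentially the same route as the paper: reduce $SR$ to deterministic strategies by convexity, use the $t$-independent minimum positive probabilities of $x$ and $z$ to upgrade the bounded expected cost to uniform pointwise bounds on $|c_1(mt)|$ and $|z|$, deduce that rounding $-c_2(s)/t$ recovers $m$ exactly once $t$ exceeds a threshold of order $M_X+M_Z$, and convert this into a $q$-symbol zero-error code contradicting the classical zero-error capacity. The one wrinkle you raise---whether $c_1(mt)\in[d]$---is sidestepped in the paper by building the code for $\mathcal{N}_t$ (whose deterministic zero-error capacity is separately shown to equal $q-1$) rather than for $\mathcal{N}$ itself, which is precisely your ``more robust'' alternative; the ``optimality forces $c_1(mt)\in[d]$'' option you also float would not be needed and is not obviously true.
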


\begin{proof}
We first show that we can restrict ourselves to deterministic strategies (Lemma~\ref{lemmaRandom}). Hence it suffices to show that 
\begin{equation}
		C_{D}^{\mathcal{N}_{t}, P_X^{t}, k} > M
\end{equation}
for large enough $t$. To this end, let $(c_{1}^{t}, c_{2}^{t},0)$ be a family of deterministic strategies which achieve a cost of $C_{D}^{\mathcal{N}_{t}, P_X^{t}, k}$. By assuming that 
\begin{equation}
\label{eq:boundforcontradictionMain}
		C_{D}^{\mathcal{N}_{t}, P_X^{t}, k} \leq M
\end{equation}
for all $t$, we will derive a contradiction. Using the definition of the cost function, this means that
\begin{equation}
	C_{D}^{\mathcal{N}_{t}, P_X^{t}, k}(c_{1}^{t}, c_{2}^{t},0) = 
	\underset{P_X, \mathcal{N}_t}{\mathbb{E}} \left[ 
		k c_{1}^{t}(x) + z^{2}
	\right] \leq M\ ,
\end{equation}
where $z = x + c_1^t(x) + c_{2}^{t}(s)$.

Since the expectation is taken over a finite set of events of non-zero probability we can show that $|c_{1}^t(x)|$ and $|z|$ are uniformly bounded for all possible inputs $x$ (see Lemma~\ref{boundC1}). That is, 
\begin{eqnarray}
	|c_{1}^t(x)| & \leq& M_{X}\ , \\
	|z| & \leq & M_{Z}\ ,
\end{eqnarray}
for some $M_{X}$ and $M_{Z}$ which are independent of $t$ since they can be defined in terms of the parameters $M$, $k$, $P_{X}^{t\, min}$ and $P_{Z}^{t\, min}$ which are all independent of $t$. (The distributions $P_{X}^{t}$ and $P_{Z}^{t}$ depend on $t$, but the \emph{minimal} probability in these distributions does not.)  

Then, for large enough $t$, $|c_{1}^{t}(x)|$ is small compared with the inputs which are of the form $x=mt$. 
Dividing by $t$ to make this more apparent, the signal $y = x + c_1^t(x)$ after the first controller satisfies
\begin{eqnarray}
	\frac{y}{t} & = &  m + \frac{c_{1}^{t}(mt)}{t} \\
	& \leq & m + \frac{M_{X}}{t} \\
	& \approx & m 
\end{eqnarray}
for large $t$ since $M_{X}/t \rightarrow 0$ as $t \rightarrow \infty$. Now the final output satisfies
\begin{eqnarray}
	\frac{z}{t} & = & m + \frac{c_{1}^{t}(mt)}{t} + \frac{c_{2}^{t}(s)}{t} \\
	& \approx &  m + \frac{c_{2}^{t}(s)}{t} \\
	& \approx & 0
\end{eqnarray}
for large $t$. The second line follows from the fact that $c_{1}^t(x)$ is uniformly bounded for all possible $x$ (Lemma~\ref{boundC1}), and the 
final line follows from the fact that also $|z|$ is uniformly bounded. Hence, we have that $z/t \rightarrow 0$ as $t \rightarrow \infty$. 
Lemma~\ref{lemma:c2approximatesmt} makes this argument formal.  In particular, when $t_0 = 2(M_{X} + M_{Y})+ 1$, for any $t \geq t_{0}$ we have
\begin{align}\label{eq:estimateMain}
	\left|\frac{x}{t} + \frac{c_2^{t}(s)}{t} \right| < \frac{1}{2}
\end{align}
for all possible $(x,s)$  such that $P_{X,S}(x,s) > 0$ (in particular, for all $x = mt$ with $m \in [q]$).  

In our concrete example with $(q,d) = (6,4)$ and $k=1$ we can bound $M_X$ and $M_Z$ from above by $\sqrt{6M}$ and $\sqrt{54M}$, respectively.  In this case, $t_0$ is bounded above by $20\sqrt{M} + 1$.
	
From now on let $t \geq t_{0}$.  Note that~\eqref{eq:estimateMain} means that $c_2^t(s)/t$ forms a good estimate for $m = x/t$. This allows us 
to construct a zero error encoding scheme for the channel $\mathcal{N}_{t}$.
We use message set $\mathcal{M} = [q]$ and let 
\begin{equation}
	\eta = \frac{-c_2^{t}(s)}{t}\ .
\end{equation}
We then define the encoding scheme $\mathcal{E}$ and decoding scheme $\mathcal{D}$ as
\begin{itemize}
	\item $\mathcal{E}(m) = mt + c_{1}^{t}(mt)$.
	\item $\mathcal{D}(s)$ is given by rounding off $\eta$ to the nearest integer.
\end{itemize}
Since $|m - \eta| < 1/2$ by~\eqref{eq:estimateMain}, the nearest integer to $\eta$ is always $m$ and $\mathcal{D}$ always decodes correctly.  Here the fact that we have used a particular distribution for the inputs does not matter.  Since we achieve zero probability of error the encoding strategy must work with probability 1 for every input with positive probability.  Put differently, our encoding strategy works with zero error for an alphabet of $q$ symbols.  This contradicts the fact that only $q-1$ symbols can be sent over $\mathcal{N}_t$ with zero error.
\end{proof}

\section{Discussion}

We demonstrated that entanglement improves classical control by studying a very simple circuit. It is an interesting open question to decide for which other control circuits
entanglement is useful, and whether one could maybe even characterize the set of circuits for which entanglement helps. 
We note that a number of existing tasks in quantum information could be converted into a problem of classical control. First, Bell inequalities~\cite{bell} expressed
in the language of non-local games (see e.g.~\cite{Bellsurvey} for a survey) can be turned into a control problem in which two non-communicating controllers, Alice and Bob, 
receive inputs and have to produce outputs trying to maximize their winning probability which in this context can be seen as the 
negative of the cost function. The circuit obtained this way is somewhat simpler than Figure~\ref{fig:ProblemCircuit} 
because there is no feedforward, but nevertheless allows the construction of a primitive 
classical circuit in which known results on Bell inequalities tell us that entanglement does help.
Indeed, one may see such non-local games as very restricted cases of the discrete team decision problem studied
in the context of Witsenhausen's counterexample~\cite{papadimitriou,radner:team}.

Another task that allows us to come up with a classical circuit in which entanglement can help is the one of sending information over a noisy-channel, 
where the encoder and the decoder can share entanglement. In this context, it is known that entanglement helps to increase the zero-error capacity of 
the channel~\cite{winter,winter2} and one can turn this problem into a less trivial control circuit by defining 
a cost function that introduces a penalty for incorrect decoding. Looking at Figure~\ref{fig:ProblemCircuit} it becomes clear that this circuit
is already somewhat related to Witsenhausen's counterexample, because transmitting information correctly from one controller to the other helps them
to minimize the cost function, underscoring the information-theoretic flavor of this circuit~\cite{mitterSahai:revisited,witsenhausen:information}
Indeed, Witsenshausen's counterexample can be cast in the language of information transmission~\cite{GroverSahaiPark} 
where minimizing one part of the cost function can be seen as minimizing the mean-square error of correctly transmitting information from 
one controller to the other. As such, the techniques
of~\cite{winter,winter2} are useful for us to establish one half of our proof, namely that the strategy using shared entanglement
leads to a cost function that is independent of the parameter $t$. However, Witsenhausen's counterexample has the 
additional twist that the use of controller 1 introduces an additional penalty term into the cost function 
that is normally not considered in quantum information theory.

Taking an extremely broad perspective, one could even take computational problems and define a cost function based on whether a particular
circuit correctly computes the solution of some function $f$ on a given input $x$. Here, measurement based computing~\cite{measurementBasedQC} 
could be seen as a possible control strategy to minimize the cost, using controllers sharing a quantum state. 

We emphasize, however, that our aim is not to take the many tasks of quantum information and turn them into specific instances of classical
control circuits. Rather, our goal is to raise the general question whether entanglement can improve classical control circuits, and 
we have shown that this is indeed possible by introducing entanglement into a well-studied and established circuit from the control literature.

Witsenhausen's counterexample was originally conceived to demonstrate that linear control laws are not always optimal~\cite{witsenhausen}. 
It has since been studied extensively to determine the computational difficulty of computing optimal classical control strategies~\cite{wu:witsenhausen,grover:vector,GroverSahaiPark},
and various heuristic algorithms are known~\cite{baglietto,lee,li}. Intriguingly, the optimal 
strategy for Witsenhausen's original circuit remains unknown (see e.g.~\cite{GroverSahaiPark}), 
and discrete versions~\cite{ho:discrete} have even been proven to be NP-complete~\cite{papadimitriou}. 
As such, it would be very interesting to determine the complexity of determining the optimal control strategy in the presence of entanglement.
Note that this can be seen as a relaxation of the original question because strategies involving entanglement include all classical strategies
as a special case, and it is hence conceivable that solving the quantum question could be easier than determining
the optimal classical strategy. This would give a method for obtaining bounds for classical strategies. In the setting of Bell inequalities discussed
above, it is known that for some non-local games, i.e.\ some specific control problems, known as XOR games, the optimal quantum strategy is 
easy to compute~\cite{watrous:nl,wehner06b} whereas finding the optimal classical strategy is hard. Yet, for other games the question is known
to be hard even in the presence of entanglement~\cite{julia:NP,thomas:NP}. The question of how difficult it is to compute the optimal \emph{quantum} strategy
for Witsenhausen's circuit is thus wide open.

\acknowledgments
We thank Andrew Doherty for valuable pointers to the quantum control literature.
This research was funded by the Ministry of Education (MOE), the National Research Foundation, Singapore, MOE Tier 3 grant MOE2012-T3-1-009, the University of Otago through a University of Otago Research Grant and the Performance Based Research Fund, and the Jack Dodd Centre for QuantumÊTechnology.
\bibliography{controlProblem.v2}

\newpage
\appendix
\pagebreak 

\section{Overview}
In this appendix we provide the technical details of our claims. 
Our goal is thereby to produce a family of Witsenhausen instances, $(\mathcal{N}_{t}, P_X^{t}, k)$ parameterized by $t > d$ for some constant $d$ such that
\begin{eqnarray}\label{eq:constantcost}
	C_{SE}^{\mathcal{N}_{t}, P_X^{t}, k} & \leq & C\ ,
\end{eqnarray}
where $C = kd^2$ is independent of $t$, but
\begin{eqnarray}
\label{eq:unbound}\label{eq:divergingcost}
	\lim_{t \rightarrow \infty} C_{SR}^{\mathcal{N}_{t}, P_X^{t}, k} & = & \infty
\end{eqnarray}
That is to say, for large enough $t$ the quantum strategy beats the classical bound, and that the difference can be made arbitrarily large as $t$ increases.

The  construction of our family of instances of the Witsenhausen problem is based on a classical channel $\mathcal{N}$ used by Cubitt et al. \cite{winter} and \cite{winter2}. 
In Section~\ref{sec:prelim} we recall their construction for completeness. 
In particular~\cite{winter,winter2} showed that the zero-error capacity of $\mathcal{N}$ without entanglement is strictly smaller than the entanglement-assisted capacity, and we will employ some
of their techniques here to prove the quantum part of our task.

In Section~\ref{sec:witsenhausen}, we then give the construction for our family of instances of the Witsenhausen problem based on $\mathcal{N}$. 
This is done by choosing a suitable encoding from the inputs in our circuit to inputs acceptable by $\mathcal{N}$. 
This encoding depends on the parameter $t$. The combined encoding composed with $\mathcal{N}$ gives us our noisy channels $\mathcal{N}_t$ to be used in the Witsenhausen problem.  
We also define a probability distribution $P_X^{t}$, depending on $t$ where the largest value with non-zero probability increases with $t$.

To establish our goal we have to prove two things: First, we have to show that using entanglement the cost function is bounded by some constant (see~\eqref{eq:constantcost}). 
For the quantum control strategy for the instance $(\mathcal{N}_t, P_X^{t}, k)$ we make some small modifications to the encoding and decoding strategies used by Cubitt et al. \cite{winter, winter2} for $\mathcal{N}$.  Roughly, the first controller will use the encoding strategy to modulate the signal on the line so as to encode the input $x$ for transmission via $\mathcal{N}_t$.  The second controller then uses the decoding strategy to determine $x$, which allows it to damp the signal down to 0.

Second, we have to show 
that by choosing $t$ large enough we can make the classical cost function exceed the quantum one (see~\eqref{eq:divergingcost}).
We will establish this result by contradiction. In particular, if, instead of diverging to infinity, the classical cost was uniformly bounded for all $t$, we show that the first controller's signal $c_1(x)$ is also uniformly bounded.  Based on this, since the input values $x$ grow without limit as $t$ increases, the second controller $c_2$ must do the bulk of the damping for large enough $t$.  In order to achieve this damping $c_2$ must have very good information about $x$, and for large enough $t$ we find that $c_2$ must have perfect information on $x$. 
This implies that $c_2$ can perfectly decode the outputs of $\mathcal{N}$ for all inputs $x$ just as the quantum controllers do. 
In turn, this allows us to construct a zero error coding scheme for $\mathcal{N}$ contradicting the classical zero-error capacity of $\mathcal{N}$.
\section{Preliminaries}\label{sec:prelim}
\subsection{Zero-error classical communication}
The scenario considered by Cubitt et al. involves a noisy classical channel $\mathcal{N}$ where the goal is to find an encoding scheme which allows the transmission of classical information over the noisy channel with \emph{zero error}.  $\mathcal{N}$ has input set $\mathcal{I}$ and output set $\mathcal{O}$ and the channel is given as $\mathcal{N}(o \, | \, i)$, the probability of obtaining output $o$ when the input is $i$.  In other words, $\mathcal{N}$ is a conditional probability distribution.
We wish to use $\mathcal{N}$ to send messages using some encoding scheme $\mathcal{E}$ and decoding scheme $\mathcal{D}$, as in Figure~\ref{fig:channelCom}.  The encoding and decoding scheme may make use of some shared resource $\rho$.

\begin{figure}[h!]
\includegraphics[scale=1.2]{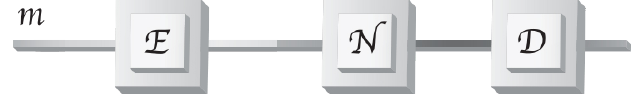}


\caption{Channel Communication with Correlation Assisted}
\label{fig:channelCom}
\end{figure}
When $\mathcal{M}$ is the input alphabet of $\mathcal{E}$ and $\mathcal{I}_{\mathcal{M}}$ is the identity channel on $\mathcal{M}$, then the \emph{zero-error capacity} of $\mathcal{N}$ when using the encoding/decoding strategy $(\mathcal{E}, \mathcal{D}, \rho)$ is 
\begin{equation}
	Z^\mathcal{N}(\mathcal{E}, \mathcal{D}, \rho) =
	\begin{cases}
	|\mathcal{M}| &\text{ if }\mathcal{D}(\mathcal{N}(\mathcal{E})) = \mathcal{I}_{\mathcal{M}}\\
	0 & \text{ otherwise}
\end{cases}
\end{equation}
Then the zero-error capacity of $\mathcal{N}$ using strategies from the set $\Omega$ is
\begin{equation}
Z^\mathcal{N}_\Omega  := \max_{(\mathcal{E}, \mathcal{D}, \rho) \in \Omega} 
	Z^\mathcal{N}(\mathcal{E}, \mathcal{D}, \rho)\ .
\end{equation}

As with Witsenhausen's problem, one can consider three classes of strategies: the deterministic strategies $D$, strategies using shared randomness $SR$, and strategies using shared entanglement $SE$.  Strategies in $D$ are  $(\mathcal{E}, \mathcal{D},0)$, where $\mathcal{E} : \mathcal{M} \rightarrow \mathcal{I}$, $\mathcal{D} : \mathcal{O} \rightarrow \mathcal{M}$ and $\rho$ can be taken to be 0.  Strategies in $SR$ are triples $(\mathcal{E}, \mathcal{D}, \rho)$ where $\rho$ is a probability distribution on some set $R$, $\mathcal{E} : \mathcal{M} \times R \rightarrow \mathcal{I}$ and $\mathcal{D} : \mathcal{O} \times R \rightarrow \mathcal{M}$.  

For a strategy $(\mathcal{E}, \mathcal{D}, \rho) \in SE$, 
$\rho \in \bop(\hil_\mathcal{E}\otimes \hil_\mathcal{D})$ is a quantum state shared between the encoder holding system $\hil_\mathcal{E}$ and the decoder holding system $\hil_\mathcal{D}$.
Without loss of generality, $\mathcal{E}$ performs a measurement on $\hil_\mathcal{E}$ indexed by its input $m$, and outputs the measurement outcome $i = \mathcal{E}(m)$ on the wire. 
Formally, this means that the encoding strategy
$\mathcal{E}$ is defined by measurement operators $\{\{\Pi_{m,i}^{1} \geq 0\}_{i}\}_m$ where for each $m$
we have $\sum_{i} \Pi_{m,i}^{1} = I_{\mathcal{H}_{\mathcal{E}}}$. 
Similarly, the decoder $\mathcal{D}$ performs measurements $\{\{\Pi_{o,m}^{2} \geq 0\}_m\}_{o}$ on $\hil_\mathcal{D}$ indexed by its input $o$ and outputs the outcome $m = \mathcal{D}(o)$. 

\subsection{Optimal deterministic encodings}
For deterministic encodings the input and output encodings are simply relabellings of some subsets of $\mathcal{I}$ and $\mathcal{O}$.  The critical consideration, then, is which subsets to use.  To this end we consider the notion of \emph{confusability}.  For element $i \in \mathcal{I}$ let $T_i = \{o \in \mathcal{O}\, |\, \mathcal{N}(o | i) > 0\}$ be the set of outputs to which $\mathcal{N}$ maps $i$. Two symbols $i, j \in \mathcal{I}$ are said to be {\itshape confusable} if their output distributions have at least one common element, i.e. if $T_i \cap T_j \neq \emptyset$.

The {\itshape confusability graph} $G(\mathcal{N})$ of the classical channel, $\mathcal{N}$, is a graph whose vertex set is $\mathcal{I}$ where two vertices are adjacent whenever they are confusable.  
\begin{eqnarray}
	V(G) & = & \mathcal{I} \\
	E(G) & = & \{(i,j)\, |\, T_{i} \cap T_{j} \neq \emptyset\}
\end{eqnarray}

Note that, for any zero-error deterministic encoding scheme, two confusable elements of $\mathcal{I}$ cannot be used to encode two different messages since it is impossible to perfectly distinguish the two elements based on the output of $\mathcal{N}$.  Hence we can put an upper bound on $Z_{D}^\mathcal{N}$ by determining the maximum size of a set of vertices in which no two are adjacent, i.e.\ the size of the largest independent set in $G(\mathcal{N})$, called the \emph{independence number} of $G(\mathcal{N})$.  

In fact, every independent set $S \in \mathcal{I}$ also gives a zero-error encoding strategy.  By definition the $T_i$'s are pairwise disjoint for $i,j \in S$ since all elements of $S$ are pairwise non-confusable.  We can then encode a message $m$ by mapping it to an element $i \in S$, and decode by mapping all elements of $T_i$ back to $m$.  So $Z_{D}^\mathcal{N}$ equals the independence number of $G(\mathcal{N})$ \cite{winter}.

\subsection{Channel Description and Classical Bound}
\label{sec:channeldescription}
In this subsection, we describe a channel $\mathcal{N}$ such that $Z_{SE}^{\mathcal{N}} > Z_{D}^\mathcal{N}$, due to Cubitt et al. \cite{winter}, \cite{winter2}).  We will use $\mathcal{N}$ in the construction of our Witsenhausen problem.

We will specify the channel by first specifying the confusability graph for our desired channel $\mathcal{N}$, which is defined in terms of a Kochen - Specker basis set.  We then construct the channel from this graph.

\begin{definition}
Let $U = \left\{B_{m} | m \in \indexset{q} \right\}$ be a set of $q$ orthonormal bases $B_{m} = \left\{\ket{b_{mj}} \, | \, j \in \indexset{d}\right\}$ for $\mathbb{C}^{d}$. 
If every set of vectors containing at least one vector from each $B_{m}$ contains at least one pair of orthogonal vectors then we call $U$ a \emph{Kochen - Specker basis set} (or a \emph{KS basis set}). 
\end{definition}
\noindent
Cubbit et al.\ use such a set of bases $U$ with $(q,d) = (6,4)$ in the construction of $\mathcal{N}$ \cite{winter}.

For our channel\footnote{
This channel is different from one used by Cubbit et al.\ \cite{winter} but it has the same confusability graph and deterministic capacity.  As well, the same quantum encoding/decoding strategy also works.
}
, set $\mathcal{I} = [q] \times [d]$ and $\mathcal{O}$ to the set of unordered pairs of $[q] \times [d]$, i.e., an element of $\mathcal{O}$ looks like $\{(m,j), (m^\prime, j^\prime)\}$ with $(m^\prime, j^\prime) \neq (m,j)$, corresponding to an edge. Set $\hat{T}_{(m,j)}$ to be the set of pairs $(m^\prime, j^\prime)$ such that $\braket{b_{mj}}{b_{m^\prime j^\prime}} = 0$.  The channel is defined by
\begin{equation}
	\mathcal{N}(\{(m,j), (m^\prime, j^\prime)\}  | (m,j)) = 
		\begin{cases}
			\frac{1}{|\hat{T}_{(m,j)}|} & o \in \hat{T}_{(m,j)} \\
			0 & o \notin \hat{T}_{(m,j)}.
		\end{cases}
\end{equation}
Note that if $i \notin o$, then $\mathcal{N}(o | i) = 0$.

For the bases $U$ with $(q,d) = (6,4)$ used by Cubbit et al.\ the degree of each vertex $(m,j)$ is $9 =|\hat{T}_{(m,j)}|$. Hence $\mathcal{N}(o | i)$ is either $1/9$ or 0.

Consider the confusability graph of $\mathcal{N}$ which has vertices $\mathcal{I}$ and edges which are unordered pairs of vertices, i.e.\ elements of $\mathcal{O}$.  Two elements $(m,j), $ and$(m^\prime, j^\prime)$ are confusabale whenever they can both be mapped to a common output. The only possible common output is $\{(m,j), (m^\prime, j^\prime)\}$.  Hence the edge set of $G(\mathcal{N})$ is just the elements of $\mathcal{O}$ which occur with non-zero probability. These are the $\{(m,j), (m^\prime, j^\prime)\}$ for which $\braket{b_{mj}}{b_{m^\prime j^\prime}} = 0$.

We now bound the classical zero-error capacity of $\mathcal{N}$~\cite{winter,winter2}. The set of vertices in $G(\mathcal{N})$ can be partitioned into $q$ cliques corresponding to the $q$ bases in $U$.  Any set of independent vertices can contain at most one vertex in each clique.  Further, since $U$ is a KS basis set, any set which contains a vertex in each of the $q$ cliques contains two vertices which are adjacent, and hence the set is not independent.  Thus the independence number of $G(\mathcal{N})$ is less than $q$ and $Z_{D}^\mathcal{N} = Z_{SR}^\mathcal{N} < q$.
 
\subsection{Quantum zero-error encoding for $\mathcal{N}$}
With an upper bound on the classical zero-error capacity of $\mathcal{N}$ in place, we recall the quantum strategy in $SE$ that beats the classical bound for zero error coding~\cite{winter}.

To begin, we first specify the entangled state that the encoder and decoder will share, which is the maximally entangled state $\ket{\psi} \in \mathcal{H}_\mathcal{E} \otimes \mathcal{H}_\mathcal{D}$, where $\mathcal{H}_\mathcal{E}$ and $\mathcal{H}_\mathcal{D}$ have dimension $d$.
\begin{equation}
	\ket{\psi} = \frac{1}{\sqrt{d}}\sum_{j \in \indexset{d}}\ket{j}\otimes\ket{j}
\end{equation}

When the encoder $\mathcal{E}$ receives a message $m \in \indexset{q}$ it measures its half of the shared state $\ket{\psi}$ along the basis $B_{m}^{*}$ (obtained by conjugating each state in $B_{m}$) to obtain outcome $j \in \indexset{d}$.  The residual state on the decoder's subsystem is then $\ket{b_{mj}}$.  The encoder then sends $(m,j)$ through the channel $\mathcal{N}$.

The decoder $\mathcal{D}$ receives from $\mathcal{N}$ some $s$ as its input, which is an edge in $G$ incident with $(m,j)$ and some other vertex $(m^\prime, j^\prime)$. 
The decoder thus cannot tell yet, whether the messages was $m$ or $m'$.
Since these vertices are adjacent, they correspond to orthogonal vectors $\ket{b_{mj}}$ (the residual state on $\mathcal{D}$'s subsystem) and $\ket{b_{m^\prime j^\prime}}$.  Hence upon receiving $s$, $\mathcal{D}$ measures its subsystem in some basis which includes $\ket{b_{mj}}$ and $\ket{b_{m^\prime j^\prime}}$.  The outcome is $(m,j)$ with probability 1.  $\mathcal{D}$ then outputs $m$.  (Note that $\mathcal{D}$ also determines $j$, which will be important later in our Witsenhausen construction.)

With $\mathcal{E}$ and $\mathcal{D}$ as above, and $\rho = \proj{\psi}{\psi}$, we find that 
\begin{equation}
Z_{SE}^\mathcal{N} \geq Z^\mathcal{N}(\mathcal{E}, \mathcal{D}, \rho) = q
\end{equation}
which beats the classical bound of $q-1$.

\section{Construction of the Witsenhausen problem}
\label{sec:witsenhausen}
\label{sec:WitProb}

For an instance of the Witsenhausen problem we need to give a channel, a probability distribution on $\mathbb{Z}$ for the inputs, and a constant $k$.  We now construct a family of channels $\mathcal{N}_t$ which we will use for our instances of the Witsenhausen problem.  We wish to use $\mathcal{N}: \indexset{q} \times \indexset{d} \rightarrow (\indexset{q}\times\indexset{d})^\times 2$
constructed in the previous section. It is clear that we cannot immediately use this channel since our wire yields inputs in $\mathbb{Z}$. We hence 
define a family of suitable encoding channels $\epsilon_t: \mathbb{Z} \rightarrow \indexset{q} \times \indexset{d}$ for $t \geq d$:
\begin{itemize}
	\item If $x = at + b$ for some $a \in \left[q\right]$ and $b \in \left[d\right]$ then $\epsilon_{t}$ outputs $(a,b)$.
	\item Otherwise $\epsilon_{t}$ outputs $(a, b)$ chosen uniformly at random from $\left[q\right] \times \left[d\right]$.
\end{itemize}

The channels we use in constructing our Witsenhausen problems are
\begin{equation}
\label{eq:ntDef}
	\mathcal{N}_{t} := \mathcal{N} \circ \epsilon_{t}
\end{equation}

We will also need to know the deterministic zero-error capacity of $\mathcal{N}_t$, given by
\begin{equation}
\label{contradict}
Z_{D}^{\mathcal{N}_{t}} = q - 1.
\end{equation}
To see this, note that the confusability graph $G^\prime$ of $\mathcal{N}_t$ has an induced subgraph isomorphic to $G$ (the confusability graph of $\mathcal{N}$) and all other vertices are all adjacent to every other vertex.  The latter vertices cannot be included in any independent set of size greater than 1, so the independence number cannot increase beyond that of $G$.  On the other hand, an independent set in $G$ maps to an independent set in $G^\prime$, so the independence number of $G^\prime$ is the same as that of $G$.

For the input probability distributions for our Witsenhausen problems, we define a distribution $P^{t}_{X}$ of the input $x$ by
\begin{equation}
\label{eq:distributionDef}
P_X^{t}(x) := \left\{ 
  \begin{array}{ll}
    \frac{1}{q} & \quad \text{if}\ x = mt \text{ where } m \in \indexset{q} \\
    0 & \quad \text{otherwise}
  \end{array} \right.
\end{equation}
Note that the value of $q$ goes hand in hand with the channel $\mathcal{N}$.
We have chosen this distribution since it ensures that no $x$ occurs for which $\epsilon_t$ generates a random output. In other words, with probability 1 the input to the Witsenhausen problem is mapped directly to a particular input to $\mathcal{N}$. Note that for a fixed $t$, the distribution~\eqref{eq:distributionDef} actually only depends on $m$, and hence we identify 
\begin{align}
P_M^{t}(m) := P_X^{t}(mt)\ .
\end{align}

Finally, we need to specify $k$.  Our results will work with any strictly positive $k$, but for concreteness one may take
\begin{equation}
	k = 1.
\end{equation}

\section{Quantum strategy for the Witsenhausen problem}
\label{sec:quantumstrategy}

Let us now show that there exists a quantum strategy using shared entanglement such that the cost function is a constant.
We will make use of the quantum encoding strategy for $\mathcal{N}$ proposed in~\cite{winter,winter2}, 
which achieves a zero-error capacity of $q$. The intuition is that the first controller, $c_1^{t}$, will perform the quantum encoding $\mathcal{E}$ for $\mathcal{N}_{t}$, suitably modified to modulate the signal so as to encode the message $m = x/t$ into $\mathcal{N}_t$.  Then $c_2^{t}$ will later decode according to the quantum decoding strategy $\mathcal{D}$ for $\mathcal{N}_{t}$ in order to determine $(m,j)$ and damp the signal on the line down to 0.

\begin{itemize}
	\item Controller $c_{1}^{t}$ does the following:
	\begin{itemize}
		\item Receive $x = mt$ and then apply $\epsilon_t(x) = (m,0)$ to find $m$.
		\item Measure $\ket{\psi}$ in basis $B_{m}^{*}$ to obtain $(m,j)$.
		\item Output $c_1^{t}(x) = j$ to be added to the signal which is then $y = x + j = mt+j$
	\end{itemize}
	\item Controller $c_{2}^{t}$ does the following:
	\begin{itemize}
		\item Receive output $s = \{(m,j), (m^{\prime}, j^{\prime})$ from the channel $\mathcal{N}_{t}$.
		\item Perform projective measurement on $\ket{\psi}$ on a basis containing $\ket{b_{m,j}}$ and $\ket{b_{m^{\prime}, j^{\prime}}}$ to obtain $(m,j)$.
		\item Output $c_2^{t}(s) = -mt - j$ to be added to the signal, which is then $z = y +  c_2^{t}(s) = 0$.
	\end{itemize}
\end{itemize}
Since final signal on the line is $z = mt + j - mt - j = 0$, the cost for this strategy is determined by the $k(c_{1}^{t}(x))^{2}$ term only:
\begin{align}
	 C^{\mathcal{N}_{t}, P_X^{t}, k}(c_{1}^t, c_{2}^t, \rho) 
	& =  k\underset{P_X^{t}, \mathcal{N}_{t}}{\mathbb{E}}\left[(c_{1}^t(mt))^{2}\right]\\
  & =  k\sum_{x}P_X^t(x)(c_{1}^t(x))^{2}\ .
\label{eq:quantumStep1} 
\end{align}
Recall that $P_{X}^{t}(x) = P_M(m)$ for $x = mt$ with $m \in [q]$ and 0 otherwise.  Hence
\begin{eqnarray}
	 C^{\mathcal{N}_{t}, P_X^{t}, k}(c_{1}^t, c_{2}^t, \rho) & 
   = & k\sum_{m \in \left[q\right]} P_M(m) (c_{1}^t(mt))^{2}\ .
\label{eq:quantumStep3}
\end{eqnarray}
Now since $0 \leq c_{1}^t(mt) \leq d - 1$
\begin{eqnarray}
C^{\mathcal{N}_{t}, P_X^{t}, k}(c_{1}^t, c_{2}^t, \rho) & 
   <  & k\sum_{m \in \left[q\right]} P_M(m) d^{2}	\\
  & = & kd^2\ .
  \label{step}
\end{eqnarray}
We thus have that $C^{\mathcal{N}_t, P_X^{t}, k}_{SE} < kd^2$ for all $t \geq d$ as promised in~\eqref{eq:constantcost}.

\section{Classical bound for the Witsenhausen problem}
\label{sec:classicalbound}

The challenging part of our result is to show that the classical cost function can be made to grow arbitrarily large by increasing $t$. 
Let us first give an outline of our proof, which can be divided into several natural steps. 
First we reduce to the case of deterministic strategies. Next we assume for contradiction that there is some uniform upper bound $M$ on the minimal deterministic cost for all $t$.  Taking some family of strategies $(c_{1}^{t}, c_{2}^{t}, 0)$ which achieve the minimal deterministic cost, we first show that $|c_{1}^{t}|$ and $|z|$ are also uniformly bounded.  Because of this, the signal after $c_{1}^{t}$ will be dominated by the $mt$ term for large $t$, i.e. $y \approx mt$.  Hence $c_{2}^{t}$ must have a very good estimate of $m$ in order to achieve a bounded final signal.  Put differently, for large enough $t$, $z \approx mt + c_{2}^{t} \approx 0$ so $-c_{2}^{t}/t \approx m$.  We use this fact to construct an encoding scheme for $\mathcal{N}_{t}$.  Roughly, the encoding scheme simulates the Witsenhausen problem to create $y$ which is sent through the channel.  The decoding scheme is to calculate $-c_{2}^{t}/t$ which will be equal to the input $m$.  For large enough $t$ this scheme has no error and beats the deterministic zero-error bound $Z_{D}^{\mathcal{N}_{t}}$, achieving our contradiction.

In the following, we will need to talk about distributions of $z$ and $s$. From their definitions, we see that these distributions do of course depend on the strategies of the controllers along with the distribution on $x$ and the channel $\mathcal{N}$.  Note, however, that $c_1^t$ and $c_{2}^{t}$ are deterministic functions of $x$  and $s$ and hence the probabilities of $z$ and $s$ are completely determined by $P_{X}^{t}$ and $\mathcal{N}_{t}$ once we fix $c_1^t$ and $c_{2}^{t}$.  More precisely, we have
\begin{eqnarray}
\label{eq:Pxt}
	P_{X,S}^{t}(x,s) & = & P_{X}^{t}(x)\mathcal{N}_{t}(s | x + c_{1}^{t}(x)) \\
\label{eq:Pzt}
	P_Z^{t}(z) & = & \sum_{\substack{x\\ z = x + c_1^t(x) + c_{2}^{t}(s)}}P_{X,S}^{t}(x,s)\ .
\end{eqnarray}
Furthermore, since all of our probability distributions have finite support (i.e. there is a finite output set for $\mathcal{N}$ and a finite set of possible inputs $x$) we can define a minimum probability:
\begin{eqnarray}
\label{eq:Pxtmin}
	P_{X}^{t\, min} & = & \min_{P_{X}^{t}(x) > 0} P_{X}^{t}(x) \\
\label{eq:Pztmin}
	P_{Z}^{t\, min} & = & \min_{P_{Z}^{t}(z) > 0} P_{Z}^{t}(z)
\end{eqnarray}

For a given $(q,d)$ we may calculate explicit values.  In particular, for $(q,d) = (6,4)$ and $\mathcal{N}^t$ and $P_X^t$ as given above, we find 
\begin{equation}
	P_X^{t\, min} = \frac{1}{6}.
\end{equation}
Although $P_Z^{t\, min}$ depends explicitly on $c_1^{t}$ and $c_2^{t}$ we can still find lower bounds on the minimal probability since these are deterministic functions.  Supposing that each $(x,s)$ gives a different value of $z$ so that there is no summing of probabilities, the minimal non-zero probability for a given $z$ is the same as the minimal probability for a given $(x,s)$.  In other cases the probability for particular values of $z$ can only go up.  Now, $s$ can depend implicitly on $x$ and $c_1^{t}$ via the input to $\mathcal{N}^t$, but for any input there 9 possible outputs from $\mathcal{N}^t$, each occurring with equal probability.  Thus we can lower bound the probability of any $(x,s)$ by  $1/6 \cdot 1/9$. So 
\begin{equation}
	P_Z^{t\, min}
	\geq 
	\frac{1}{6} \cdot \frac{1}{9} 
	= \frac{1}{54}.
\end{equation}

\subsection{Main theorem}

Here we state and prove the main theorem for the bound on $C_{SR}^{\mathcal{N}_{t}, P_X^{t}, k}$. For convenience sake, we first restate the theorem and provide some more details in the proof outline.
We will use various lemmas in the proof, which appear with proofs in later sections.

\begin{theorem}
	Let $(\mathcal{N}_{t}, P_X^{t}, k)$ be as given in Appendix \ref{sec:witsenhausen}.  Then for any $M \in \mathbb{R}$ there exists a $t_{0} \geq d$ such that for all $t \geq t_{0}$
	\begin{equation}
	C_{SR}^{\mathcal{N}_{t}, P_X^{t}, k} > M.
\end{equation}
\end{theorem}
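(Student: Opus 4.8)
The plan is to argue by contradiction, converting a hypothetical family of classical strategies of uniformly bounded cost into a zero-error code of size $q$ for $\mathcal{N}_t$, which contradicts the deterministic bound $Z_D^{\mathcal{N}_t} = q-1$ established in~\eqref{contradict}. The first reduction is to deterministic strategies: with the conditional response functions fixed, the cost is affine in the distribution $\rho$ on the shared random string, so $C(c_1,c_2,\rho) = \sum_r \rho(r)\,C(c_1(\cdot,r),c_2(\cdot,r),0) \ge C_D^{\mathcal{N}_t,P_X^t,k}$, and since $D \subseteq SR$ gives the reverse inequality, $C_{SR}^{\mathcal{N}_t,P_X^t,k} = C_D^{\mathcal{N}_t,P_X^t,k}$ (Lemma~\ref{lemmaRandom}). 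Hence it suffices to show $C_D^{\mathcal{N}_t,P_X^t,k} > M$ for $t$ large.

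Suppose then, for contradiction, that $C_D^{\mathcal{N}_t,P_X^t,k} \le M$ for all $t$, and fix optimal deterministic strategies $(c_1^t,c_2^t,0)$. The cost is the expectation of the nonnegative quantities $k(c_1^t(x))^2$ and $z^2 = (x+c_1^t(x)+c_2^t(s))^2$ over the finite set of pairs $(x,s)$ of positive probability, so each term obeys $(c_1^t(x))^2 \le M/(k\,P_X^{t\,min})$ and $z^2 \le M/P_Z^{t\,min}$. The crucial point is that these minimal probabilities stay bounded away from $0$ uniformly in $t$: $P_X^t$ is uniform on the $q$ points $x=mt$ (so the randomizing branch of $\epsilon_t$ is never triggered and $P_X^{t\,min}=1/q$), and $\mathcal{N}$ has a fixed, $t$-independent number of equiprobable outputs per input, which bounds $P_Z^{t\,min}$ below by a constant. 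Consequently there are bounds $M_X,M_Z$, independent of $t$, with $|c_1^t(x)| \le M_X$ and $|z| \le M_Z$ for every $x$ with $P_X^t(x)>0$ (Lemma~\ref{boundC1}).

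From this I extract the decisive estimate. On an input $x=mt$ we have $mt+c_2^t(s) = z - c_1^t(mt)$, hence $|m + c_2^t(s)/t| \le (M_X+M_Z)/t$, so for $t \ge t_0 := 2(M_X+M_Z)+1$ and every positive-probability pair $(x,s)$ we get $|x/t + c_2^t(s)/t| < 1/2$ (Lemma~\ref{lemma:c2approximatesmt}, cf.~\eqref{eq:estimateMain}). Fixing such a $t$, define an encoding $\mathcal{E}(m) = mt + c_1^t(mt)$ for $\mathcal{N}_t$ on message set $[q]$, and let $\mathcal{D}(s)$ be the integer nearest to $-c_2^t(s)/t$. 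For every $s$ that $\mathcal{N}_t$ produces with positive probability on input $\mathcal{E}(m)$ the estimate forces $|{-c_2^t(s)/t} - m| < 1/2$, so $\mathcal{D}(s)=m$ with certainty; in particular no output $s$ is shared by two distinct messages, as that would place two integers within $1/2$ of $-c_2^t(s)/t$. Thus $(\mathcal{E},\mathcal{D})$ is a zero-error code of size $q$ for $\mathcal{N}_t$, contradicting $Z_D^{\mathcal{N}_t}=q-1$. This contradiction shows that no uniform bound $M$ can hold, which is the theorem.

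I expect the boundedness step to be the only delicate one: the whole argument rests on $P_X^{t\,min}$ and $P_Z^{t\,min}$ being bounded below independently of $t$, which is exactly why the input distribution must be supported on the $q$ points $mt$ (keeping $\epsilon_t$ out of its randomizing branch) and why it matters that $\mathcal{N}$'s output multiplicity does not grow with $t$. Once the uniform bounds $M_X,M_Z$ are in hand, the approximation inequality and the reduction to a contradiction with $Z_D^{\mathcal{N}_t}=q-1$ are elementary.
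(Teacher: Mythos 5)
Your proposal is correct and follows essentially the same route as the paper's proof: reduction to deterministic strategies by convexity, uniform bounds on $|c_1^t(x)|$ and $|z|$ via the $t$-independent minimal probabilities, the estimate $|x/t + c_2^t(s)/t| < 1/2$ for $t \geq 2(M_X+M_Z)+1$, and the conversion into a $q$-symbol zero-error code contradicting $Z_D^{\mathcal{N}_t} = q-1$. (You even correct the paper's apparent typo of $M_Y$ for $M_Z$ in the definition of $t_0$.)
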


\begin{proof}
We first show that we can restrict ourselves to deterministic strategies (Lemma~\ref{lemmaRandom}).  Hence it suffices to show that 
\begin{equation}
		C_{D}^{\mathcal{N}_{t}, P_X^{t}, k} > M
\end{equation}
for large enough $t$. To this end, let $(c_{1}^{t}, c_{2}^{t},0)$ be a family of deterministic strategies which achieve a cost of $C_{D}^{\mathcal{N}_{t}, P_X^{t}, k}$.  By assuming that 
\begin{equation}
\label{eq:boundforcontradictionMain}
		C_{D}^{\mathcal{N}_{t}, P_X^{t}, k} \leq M
\end{equation}
for all $t$, we will derive a contradiction. Using the definition of the cost function, this means that
\begin{equation}
	C_{D}^{\mathcal{N}_{t}, P_X^{t}, k}(c_{1}^{t}, c_{2}^{t},0) = 
	\mathbb{E}_{P_X,\mathcal{N}_t} \left[ 
		k c_{1}^{t}(x) + z^{2}
	\right] \leq M\ ,
\end{equation}
where $z = x + c_1^t(x) + c_{2}^{t}(s)$.

Since the expectation is taken over a finite set of events of non-zero probability, we can show that $|c_{1}^t(x)|$ and $|z|$ are uniformly bounded for all possible inputs $x$ (see Lemma~\ref{boundC1}). That is, 
\begin{eqnarray}
	|c_{1}^{t}(x)| & \leq& M_{X}\ , \\
	|z| & \leq & M_{Z}\ ,
\end{eqnarray}
for some $M_{X}$ and $M_{Z}$ which are independent of $t$ since they can be defined in terms of the parameters $M$, $k$, $P_{X}^{t\, min}$ and $P_{Z}^{t\, min}$ which are all independent of $t$. (The distributions $P_{X}^{t}$ and $P_{Z}^{t}$ depend on $t$, but the \emph{minimal} probability in these distributions does not.)  Then, for large enough $t$, $|c_{1}^{t}(x)|$ is small compared with the inputs which are of the form $x=mt$. 
Dividing by $t$ to make this more apparent, the signal $y = x + c_1^t(x)$ after the first controller satisfies
\begin{eqnarray}
	\frac{y}{t} & = &  m + \frac{c_{1}^{t}(mt)}{t} \\
	& \leq & m + \frac{M_{X}}{t} \\
	& \approx & m 
\end{eqnarray}
for large $t$ since $M_{X}/t \rightarrow 0$ as $t \rightarrow \infty$. Now the final output satisfies
\begin{eqnarray}
	\frac{z}{t} & = & m + \frac{c_{1}^{t}(mt)}{t} + \frac{c_{2}^{t}(s)}{t} \\
	& \approx &  m + \frac{c_{2}^{t}(s)}{t} \\
	& \approx & 0
\end{eqnarray}
for large $t$. The second line follows from the fact that $c_{1}^t(x)$ is uniformly bounded for all possible $x$ (Lemma~\ref{boundC1}), and the 
final line follows from the fact that also $|z|$ is uniformly bounded. Hence, we have that $z/t \rightarrow 0$ as $t \rightarrow \infty$. 
Lemma~\ref{lemma:c2approximatesmt} makes this approximation formal.  In particular, when $t_0 = 2(M_{X} + M_{Y})+ 1$, for any $t \geq t_{0}$ we have
\begin{align}\label{eq:estimate}
	\left|\frac{x}{t} + \frac{c_2^{t}(s)}{t} \right| < \frac{1}{2}
\end{align}
for all $(x,s)$ such that $P_{X,S}(x,s) > 0$ (in particular, for all $x = mt$ with $m \in [q]$).  

In our concrete example with $(q,d) = (6,4)$ and $k=1$ we can bound $M_X$ and $M_Z$ from above by $\sqrt{6M}$ and $\sqrt{54M}$, respectively.  In this case, $t_0$ is bounded above by $20\sqrt{M} + 1$.

From now on let $t \geq t_{0}$.  Note that~\eqref{eq:estimate} means that $c_2^{t}(s)/t$ forms a good estimate for $m = x/t$. This allows us 
to construct a zero error encoding scheme for the channel $\mathcal{N}_{t}$ as in Figure \ref{fig:channelCom}. 
We use messages set $\mathcal{M} = [q]$ and let 
\begin{equation}
	\eta = \frac{-c_2^{t}(s)}{t}\ .
\end{equation}
We then define the encoding scheme $\mathcal{E}$ and decoding scheme $\mathcal{D}$ as
\begin{itemize}
	\item $\mathcal{E}(m) = mt + c_{1}^{t}(mt)$.
	\item $\mathcal{D}(s)$ is given by rounding off $\eta$ to the nearest integer.
\end{itemize}
Since $|m - \eta| < 1/2$ by~\eqref{eq:estimateMain}, the nearest integer to $\eta$ is always $m$ and $\mathcal{D}$ always decodes correctly.  Here the fact that we have used a particular distribution for the inputs does not matter.  Since we achieve zero probability of error the encoding strategy must work with probability 1 for every input with positive probability.  Put differently, our encoding strategy works with zero error for an alphabet of $q$ symbols.  This contradicts the fact that only $q-1$ symbols can be sent over $\mathcal{N}_t$ with zero error.
\end{proof}

\subsection{Step 1: Deterministic strategies are optimal}
Here we reduce to the case of deterministic strategies.  Roughly, any strategy using shared randomness is a convex combination of deterministic strategies.  All the strategies in the convex combination have their costs bounded below by the deterministic cost, and the convex combination cannot take the cost any lower.

\begin{lemma}
\label{lemmaRandom}
Let $(\mathcal{N}, P_X, k)$ be an instance of the discrete Witsenhausen problem.  Then 
\begin{equation}
C_{D}^{\mathcal{N}, P_X, k} = C_{SR}^{\mathcal{N}, P_X, k}\ .
\end{equation}
\end{lemma}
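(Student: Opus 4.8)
The plan is to show the two inequalities $C_{SR} \leq C_D$ and $C_{SR} \geq C_D$ separately. The first is immediate: a deterministic strategy is a special case of a shared-randomness strategy (take $R$ to be a one-point set, or equivalently let $\rho$ be a point mass), so the minimum over the larger class $SR$ can only be smaller or equal, giving $C_{SR}^{\mathcal{N}, P_X, k} \leq C_D^{\mathcal{N}, P_X, k}$. The substance is the reverse inequality, and the idea is the standard convexity argument: conditioning on the value of the shared random variable $\rho = r \in R$ turns any $SR$ strategy into a deterministic strategy, and the cost of the $SR$ strategy is the average of these conditional costs, hence at least the minimum, which is $C_D$.

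In more detail, fix an arbitrary strategy $(c_1, c_2, \rho) \in SR$ where $\rho$ is a distribution on $R$ and $c_1, c_2 : \mathbb{Z} \times R \to \mathbb{Z}$. For each fixed $r \in R$ define $c_1^{(r)}(\cdot) := c_1(\cdot, r)$ and $c_2^{(r)}(\cdot) := c_2(\cdot, r)$; then $(c_1^{(r)}, c_2^{(r)}, 0)$ is a deterministic strategy. Since the randomness $r$ is chosen independently of the input $x$ and is unaffected by the channel $\mathcal{N}$, the expectation defining the cost in~\eqref{eq:costFunc} factors as
\begin{align}
C^{\mathcal{N}, P_X, k}(c_1, c_2, \rho)
&= \underset{r \sim \rho}{\mathbb{E}}\Big[\,\underset{P_X, \mathcal{N}}{\mathbb{E}}\big[k(c_1^{(r)}(x))^2 + (x + c_1^{(r)}(x) + c_2^{(r)}(s))^2\big]\Big] \nonumber \\
&= \underset{r \sim \rho}{\mathbb{E}}\big[\,C^{\mathcal{N}, P_X, k}(c_1^{(r)}, c_2^{(r)}, 0)\,\big].
\end{align}
Each term in the outer average is at least $C_D^{\mathcal{N}, P_X, k}$ by definition of the deterministic minimum, so the average is at least $C_D^{\mathcal{N}, P_X, k}$ as well. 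Taking the infimum over all $(c_1, c_2, \rho) \in SR$ gives $C_{SR}^{\mathcal{N}, P_X, k} \geq C_D^{\mathcal{N}, P_X, k}$, and combined with the trivial direction we conclude equality.

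The only point requiring a little care — and the closest thing to an obstacle — is making sure the interchange of expectations is legitimate and that the relevant minima are actually attained (or, if not, replacing "min" with "inf" throughout). Since $\mathcal{N}$ has finite output support and, in the instances of interest, $P_X$ has finite support, for each fixed $r$ the inner expectation is a finite sum and everything is well-defined; and the deterministic cost, being a minimum of a continuous function, is attained on the relevant instances. One should also note that the argument is completely independent of the feedforward structure of the circuit: all that is used is that $\rho$ is generated independently of the inputs and that the cost is linear in the distribution over $\rho$. This is why the same statement holds verbatim for the channel-coding setting as well.
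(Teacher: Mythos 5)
Your proposal is correct and follows essentially the same route as the paper's proof: both directions are established by the trivial inclusion $D \subseteq SR$ and by decomposing the shared-randomness cost as an average over $r$ of deterministic costs, each bounded below by $C_D^{\mathcal{N}, P_X, k}$. Your added remarks on the legitimacy of the interchange of expectations and on attainment of the minima are sensible refinements but do not change the argument.
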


\begin{proof}
Let $(c_{1}, c_{2}, \rho) \in SR$ be a strategy that achieves $C_{SR}^{\mathcal{N}, P_X, k}$.  Let us suppose that $\rho$ is drawn from a set $R$ and distributed according to some measure $d\rho$ with $\int_R d\rho = 1$.  The cost function \eqref{eq:costFunc} can then be written as
\begin{align}
  & C^{\mathcal{N}, P_X, k}(c_{1}, c_{2}, \rho) \\		
= & \underset{P_X,\, \mathcal{N}, d\rho}{\mathbb{E}}\left[kc_{1}^{2}\left(x, \rho\right) + (c_{2}(s, \rho) + x + c_{1}(x, \rho))^{2}\right] \\
 = & \int_R 
 \underset{P_X,\, \mathcal{N}}{\mathbb{E}}  
	\left[
		kc_{1}^{2}\left(x, \rho\right) + (c_{2}(s, \rho) + x + c_{1}(x, \rho))^{2}
	\right]  d\rho 
\label{lastStepDet}
\end{align}
For a fixed $\rho$, $(c_1(x,\rho), c_2(s,\rho), 0)$ is a deterministic strategy, and for each fixed $\rho$ the integrand in the above is the cost of that deterministic strategy, which is 
bounded below by the deterministic cost. Hence the above is bounded below by
\begin{eqnarray}
	C_{SR}^{\mathcal{N}, P_X, k} 
	& = &
		C^{\mathcal{N}, P_X, k}(c_{1}, c_{2}, \rho) \\
	& \geq &
		\int_R C_{D}^{\mathcal{N}, P_X, k} d\rho  \\
	& = &
		C_{D}^{\mathcal{N}, P_X, k}.
\end{eqnarray}
On the other hand, any deterministic strategy is also a shared randomness strategy in which $\rho$ is ignored, so 
\begin{equation}
C_{D}^{\mathcal{N}, P_X, k} \geq C_{SR}^{\mathcal{N}, P_X, k}\ .
\end{equation}
\end{proof}

\subsection{Step 2: Bounding controller 1 and the output}
Once we assume that there is some bound on the deterministic cost, then we can derive a bound on the output of $c_{1}^{t}$.  This follows straightforwardly from the fact that the cost is an expectation over some finite support, so each possible output from $c_{1}^{t}$ occurs with at least some minimal probability.  As well, $z$'s contribution to the cost is always positive, so we can omit it and the bound remains valid.  Then, multiplying $c_{1}^{t}$ by this minimal probability we must obtain some value which, when squared, is less than the bound.  A similar argument applies to $z$.

\begin{lemma}
\label{boundC1} Suppose $C^{\mathcal{N}_t, P_X^t, k}_{D} \leq M $.  Let $(c_{1}^t, c_{2}^t, 0) \in D$ be a deterministic strategy such 
that $C^{\mathcal{N}_t, P_X^t, k}_{D} = C^{\mathcal{N}_t, P_X^t, k}(c_{1}^{t}, c_{2}^{t}, 0)$. Then for all $x$ such that $P_{X}^{t}(x) > 0$  and $z$ such that $P_{Z}(z) >0$ 
\begin{eqnarray}
|c_{1}^{t}(x)| & \leq &  M_{X} = \sqrt{\frac{M}{k P_X^{t\, min}}} \\
|z| & \leq & M_{Z} = \sqrt{\frac{M}{P_{Z}^{t\, min}}}
\ .
\end{eqnarray}
\end{lemma}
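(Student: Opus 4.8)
The plan is to exploit the fact that the minimal deterministic cost is an expectation, over a distribution with finite support, of a sum of the two \emph{nonnegative} terms $k(c_1^t(x))^2$ and $z^2$ appearing in~\eqref{eq:costFunc}. Bounding this expectation above by $M$ therefore bounds each term separately, and since every event in the support occurs with probability at least the relevant minimum probability, each summand --- hence each value $c_1^t(x)$ or $z$ --- is controlled.

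First I would discard the (nonnegative) $z^2$ term to obtain
\begin{equation}
k\sum_{x} P_X^t(x)\,\bigl(c_1^t(x)\bigr)^2 \;\leq\; C^{\mathcal{N}_t, P_X^t, k}(c_1^t, c_2^t, 0) \;\leq\; M .
\end{equation}
Fixing any $x_0$ with $P_X^t(x_0) > 0$ and keeping only the $x = x_0$ summand, together with $P_X^t(x_0) \geq P_X^{t\, min}$ from~\eqref{eq:Pxtmin}, gives $k\,P_X^{t\, min}\,(c_1^t(x_0))^2 \leq M$, i.e.\ $|c_1^t(x_0)| \leq \sqrt{M/(k\,P_X^{t\, min})} = M_X$.

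For the bound on $z$ I would symmetrically discard the $k(c_1^t(x))^2$ term and use the definition~\eqref{eq:Pzt} of $P_Z^t$ to rewrite the expectation of $z^2$ purely in terms of $P_Z^t$, obtaining $\sum_z P_Z^t(z)\, z^2 \leq M$. Keeping only the term for some fixed $z_0$ with $P_Z^t(z_0) > 0$ and using $P_Z^t(z_0) \geq P_Z^{t\, min}$ from~\eqref{eq:Pztmin} yields $z_0^2 \leq M / P_Z^{t\, min}$, i.e.\ $|z_0| \leq \sqrt{M/P_Z^{t\, min}} = M_Z$. Since $M_X$ and $M_Z$ are expressed only through $M$, $k$, $P_X^{t\, min}$ and $P_Z^{t\, min}$, and the latter two are bounded below independently of $t$ (as recorded just after~\eqref{eq:Pztmin}, e.g.\ $P_X^{t\, min} = 1/6$ and $P_Z^{t\, min} \geq 1/54$ for $(q,d)=(6,4)$), the resulting bounds are uniform in $t$.

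There is really no hard step here: the only points that need care are that we are entitled to drop either nonnegative term from the cost on its own, and that both expectations are genuine finite sums --- which holds because $\mathcal{N}$ has a finite output set and $P_X^t$ has finite support, exactly the observations made just before~\eqref{eq:Pxtmin}.
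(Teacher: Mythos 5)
Your proposal is correct and follows essentially the same route as the paper's proof: drop the nonnegative complementary term from the cost, lower-bound the remaining expectation by a single summand weighted by the minimum probability ($P_X^{t\,\min}$ or $P_Z^{t\,\min}$), and solve for the bound. The only cosmetic difference is that you spell out the $z$ case via the definition of $P_Z^t$ in~\eqref{eq:Pzt}, where the paper simply invokes ``an analogous argument.''
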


\begin{proof}
By assumption, we have
\begin{align}
\label{eq:boundStep1}
M & \geq C^{\mathcal{N}_t, P_X^t, k}(c_{1}^t, c_{2}^t, 0) \ .
\end{align}
By definition and linearity of expectation,
\begin{align}
& C^{\mathcal{N}_{t}, P_X^t, k}(c_{1}^{t}, c_{2}^{t}, 0) \\	
&= \underset{P_X^t,\, \mathcal{N}_{t}}{\mathbb{E}}\left[k(c_{1}^{t}(x))^{2} + (c_{2}^t(s) + x + c_{1}^t(x))^{2}\right]\\
&= \underset{P_X^t,\, \mathcal{N}_{t}}{\mathbb{E}}\left[k(c_{1}^{t}(x))^{2}\right] + \underset{P_X^t,\, \mathcal{N}_{t}}{\mathbb{E}}\left[(c_{2}^t(s) 
+ x + c_{1}^t(x))^{2}\right]
\label{eq:lastStepDeter}
\end{align}
Since $(c_{2}^t(s) + x + c_{1}^t(x))^{2} \geq 0$, we also have that
$\underset{P_X^t,\, \mathcal{N}_t}{\mathbb{E}}\left[(c_{2}^t(s) + x + c_{1}^t(x))^{2}\right] \geq 0$. Therefore, we have from (\ref{eq:lastStepDeter}) that
\begin{equation}
C^{\mathcal{N}_t, P_X^t, k}(c_{1}^t, c_{2}^t, 0) \geq \underset{P_X^t,\, \mathcal{N}_t}{\mathbb{E}}\left[k(c_{1}^t(x))^{2}\right]\ .
\label{eq:boundStep3}
\end{equation}
Because $c_{1}^t$ does not depend on $\mathcal{N}_t$, we have
\begin{equation}
\underset{P_X^t,\, \mathcal{N}_t}{\mathbb{E}}\left[k(c_{1}^t(x))^{2}\right] = \underset{P_X^t}{\mathbb{E}}\left[k(c_{1}^t(x))^{2}\right]
\label{eq:boundStep2}
\end{equation}
Combining (\ref{eq:boundStep1}), (\ref{eq:boundStep2}) and (\ref{eq:boundStep3}) we thus have
\begin{align}
M &\geq \underset{P_X^t}{\mathbb{E}}\left[k(c_{1}^t(x))^{2}\right]\\
  	&= k\sum_{m \in \left[q\right]}P_X^t(x)(c_{1}^t(x))^{2}\\
	&\geq k\sum_{m \in \left[q\right]}P_X^{t\, min}(c_{1}^t(x))^{2}	
\label{lastStep}									
\end{align}
where $P_{X}^{t\, min}$ is defined in~\eqref{eq:Pxtmin}.  All the terms in the final sum are positive, hence for all $m$
\begin{align}
k P_X^{t\, min}((c_{1}^t(x))^{2} \leq M\ ,
\end{align}
and thus
\begin{align}
\left|c_{1}^t(x)\right| \leq \sqrt{\frac{M}{k P_X^{t\, min}}}\ .
\end{align}
By an analogous argument 
\begin{equation}
	|z| \leq \sqrt{\frac{M}{P_{Z}^{t\, min}}}
\end{equation}
where $P_{Z}^{t\, min}$ is defined in~\eqref{eq:Pztmin}.
\end{proof}

For the $\mathcal{N}^t$ and $P_X^t$ defined in appendix~\ref{sec:witsenhausen} with $(q,d)= (6,4)$ we can give explicit bounds on $M_X$ and $M_Z$.  Using $k=1$ we find
\begin{eqnarray}
	M_X 
	& \leq &
	\sqrt{
	 	6M
	 }
\\
	M_Z
	& \leq &
	\sqrt{
		54M
	}.
\end{eqnarray}

\subsection{Step 3: Approximating $m$}
With $c_{1}^t(x)$ and $z$ bounded, we find that $c_{2}^t$ must have a very good estimate on $x=mt$.  We divide out by $t$ to make this more apparent.

\begin{lemma}\label{lemma:c2approximatesmt}
Suppose $C^{\mathcal{N}_{t}, P_X^{t}, k}_{D} \leq M $ for all $t \geq d$. For each $t$ let $(c_{1}^{t}, c_{2}^{t}, 0) \in D$ be a deterministic strategy such that $C^{\mathcal{N}_{t}, P_X^{t}, k}_{D} = C^{\mathcal{N}_{t}, P_X^{t}, k}(c_{1}^{t}, c_{2}^{t}, 0)$.  Then there exists a $t_0 \geq d$ such that for all $t \geq t_0$
\begin{equation}
\left|\frac{x}{t} + \frac{c_2^{t}(s)}{t} \right| < \frac{1}{2}
\end{equation}
for all $(x,s)$ such that $P_{X,S}(x,s) > 0$. 
 
\end{lemma}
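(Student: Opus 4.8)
The plan is to turn the informal "$z/t \to 0$" heuristic from the main theorem's proof outline into a quantitative statement using the uniform bounds from Lemma~\ref{boundC1}. First I would invoke Lemma~\ref{boundC1}: under the hypothesis $C_{D}^{\mathcal{N}_t, P_X^t, k} \leq M$ for all $t$, for any $(x,s)$ with $P_{X,S}^t(x,s) > 0$ the associated $z = x + c_1^t(x) + c_2^t(s)$ satisfies $|z| \leq M_Z$ and $|c_1^t(x)| \leq M_X$, where $M_X, M_Z$ are the explicit constants from that lemma and are independent of $t$. (Note $P_{X,S}^t(x,s)>0$ forces $z$ to occur with positive probability via~\eqref{eq:Pzt}, so the bound on $|z|$ applies.) These two bounds are the only inputs needed.

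The key algebraic step is the triangle inequality: writing $x + c_2^t(s) = z - c_1^t(x)$, we get
\begin{equation}
\left|\frac{x}{t} + \frac{c_2^t(s)}{t}\right| = \frac{|z - c_1^t(x)|}{t} \leq \frac{|z| + |c_1^t(x)|}{t} \leq \frac{M_Z + M_X}{t}.
\end{equation}
So it suffices to choose $t_0$ large enough that $(M_X + M_Z)/t_0 < 1/2$, i.e.\ any $t_0 > 2(M_X + M_Z)$ works; the statement in the main theorem uses $t_0 = 2(M_X + M_Y) + 1$ (with the mild notational slip $M_Y$ for $M_Z$), which is safely above this threshold and an integer. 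For all $t \geq t_0$ the claimed bound then holds for every $(x,s)$ with $P_{X,S}^t(x,s) > 0$, and in particular for $x = mt$, $m \in [q]$, since those inputs have positive probability under $P_X^t$ and $\epsilon_t$ maps them deterministically.

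There is essentially no obstacle here — the lemma is a short corollary of Lemma~\ref{boundC1} plus a one-line estimate. The only point requiring a word of care is justifying that the bound $|z| \leq M_Z$ is available for the specific $(x,s)$ in question: one must observe that any $(x,s)$ with $P_{X,S}^t(x,s) > 0$ produces, via the deterministic maps $c_1^t, c_2^t$, a value $z$ with $P_Z^t(z) > 0$, so Lemma~\ref{boundC1} indeed applies. Everything else is the displayed triangle-inequality computation and the choice of $t_0$.
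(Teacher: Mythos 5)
Your proposal is correct and follows essentially the same route as the paper's own proof: invoke Lemma~\ref{boundC1} for the uniform bounds on $|c_{1}^{t}(x)|$ and $|z|$, apply the triangle inequality to get $|x + c_{2}^{t}(s)| \leq M_{X} + M_{Z}$, divide by $t$, and choose $t_{0}$ above $2(M_{X}+M_{Z})$. Your added remark justifying that the bound on $|z|$ applies to the specific $(x,s)$ under consideration, and your observation that the paper's $M_{Y}$ is a notational slip for $M_{Z}$, are both accurate and slightly more careful than the paper's own exposition.
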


\begin{proof}
As shown in Lemma~\ref{boundC1}, $|c_{1}^{t}(x)|$ and $|z| = |x + c_{1}^{t}(x) + c_{2}^{t}(s)|$ are uniformly bounded,   
\begin{eqnarray}
	|c_{1}^{t}(x)| & \leq& M_{X} \\
	|x + c_{1}^{t}(x) + c_{2}^{t}(s)| & \leq & M_{Z}
\end{eqnarray}
with $M_{X}$ and $M_{Z}$ independent of $t$.  Combining these, we find
\begin{equation}
\left|
	x + c_2^{t}(s)
\right|
\leq 
M_{X} + M_{Y}
\end{equation}
Recall that $x = mt$ for some $m \in [q]$ for every $x$ such that $P_{X}^{t}(x) > 0$.  Dividing the above inequality by $t > 0$ to obtain
\begin{equation}
	\left|
		m + \frac{c_2^{t}(s)}{t}
	\right|
	\leq \frac{M_{X} + M_{Y}}{t}
\end{equation} 
for every $(m,s)$ with $m \in [q]$.  Then for $t \geq t_0 := 2(M_{X} + M_{Y})+ 1$, we have our desired result.
\end{proof}

We can make an estimate of $t_0$ for our explicit construction with $(q,d) = (6,4)$ and $k=1$.  Indeed
\begin{equation}
	t_0
	\leq
	2(\sqrt{54} + \sqrt{6})\sqrt{M} + 1
	\leq 20 \sqrt{M} + 1.
\end{equation}

\end{document}